\newtheorem*{theorem*}{Theorem}
\definecolor{myurlcolor}{rgb}{0,0,0.4}
\definecolor{mycitecolor}{rgb}{0,0.5,0}
\definecolor{myrefcolor}{rgb}{0.5,0,0}
\renewcommand\paragraph{%
 \@startsection{paragraph}{4}{1pc}%
 {0pt}%
 {-4pt}%
 {\normalfont\normalsize\bfseries}}
\begin{document}

\bibliographystyle{apsrev}

\newcommand{\Tr}{\mathrm{Tr}}

\title{Quantum realization of arbitrary joint measurability structures}

\author{Ravi Kunjwal}
\email{rkunj@imsc.res.in} 
\affiliation{The Institute of Mathematical
  Sciences, C.I.T Campus, Tharamani, Chennai 600 113, India.}

\author{Chris Heunen}
\email{heunen@cs.ox.ac.uk}
\affiliation{Department of Computer Science, University of Oxford}

\author{Tobias Fritz}
\email{tfritz@perimeterinstitute.ca} 
\affiliation{Perimeter Institute for Theoretical Physics, Waterloo, Ontario, Canada.}

\date{\today}

\begin{abstract}
In many a traditional physics textbook, a quantum measurement is defined as a projective measurement represented by
a Hermitian operator. In quantum information theory, however, the concept of a measurement is dealt with in complete generality
and we are therefore forced to confront the more general notion of positive-operator valued measures (POVMs) which 
suffice to describe all measurements that can be implemented in quantum experiments. We study the (in)compatibility
of such POVMs and show that quantum theory realizes all possible (in)compatibility relations among sets 
of POVMs. This is in contrast to the restricted case of projective measurements for which commutativity is essentially equivalent to 
compatibility. Our result therefore points out a fundamental feature with respect to the (in)compatibility of 
quantum observables that has no analog in the case of projective measurements.
\end{abstract}

\pacs{03.65.Ta, 03.65.Ud}

\maketitle

\section{Introduction}
In the traditional textbook treatment of measurements in quantum theory one usually comes across projective measurements. 
For these measurements, commutativity of the associated Hermitian operators is necessary and sufficient for them to be compatible.
That is, commuting Hermitian operators represent quantum observables that can be jointly measured in a single experimental setup.
Furthermore, given a set of $N$ projective measurements, commutativity means pairwise commutativity and we have:
pairwise compatibility $\Leftrightarrow$ global compatibility. This equivalence is rather special since it reduces the problem of deciding whether a set of projective measurements is 
compatible to checking that every pair in the set commutes. Operationally, this also means that the measurement statistics 
obtained by performing these measurements sequentially on any preparation of a quantum system is independent of the sequence 
in which the measurements are performed, \textit{e.g.}, if $A$, $B$, $C$ are Hermitian operators that commute pairwise, then the
sequential measurements $ABC$, $ACB$, $BAC$, $BCA$, $CAB$ and $CBA$ are all physically equivalent.

However, once the projective property is relaxed and the resulting positive-operator valued measures (POVMs) are considered,
the implication ``pairwise compatibility $\Rightarrow$ global compatibility'' no longer holds. The converse implication is still 
true. Indeed, one can construct examples where a set of three POVMs is pairwise compatible but there is no global compatibility
between them~\cite{Kraus,LSW,KG,HRF}. With this in mind, our purpose in this paper is to explore whether there really is any 
constraint on the (in)compatibility relations that one could realize between quantum measurements (POVMs). If, for example, 
certain sets of (in)compatibility relations were not allowed in quantum theory then that would point out conceivable 
joint measurability structures that are nevertheless forbidden in nature. A basic understanding of what is allowed and what is
forbidden in a physical theory is essential from a foundational point of view. Indeed, an example that readily comes to mind 
is the impossibility of faster-than-light signalling, a principle that has served as an invaluable guide to ruling out 
theories---and being highly skeptical of putative phenomena---that may suggest the contrary. Likewise, our larger endeavour in this work is to
study the possibilities and limitations of quantum theory with respect to (in)compatibility relations.

\begin{figure}
\includegraphics{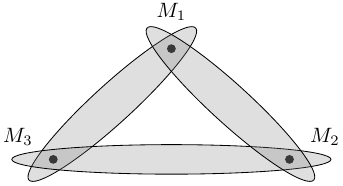}
\caption{Specker's scenario.}
\label{specker}
\end{figure}

It is a fact worth noting that the impossibility of jointly implementing arbitrary sets of measurements is a key ingredient that 
enables a demonstration of the nonclassicality of quantum theory in proofs of Bell's theorem~\cite{Bell64} and the Kochen--Specker 
theorem~\cite{KS67}. A finite set of measurements is called \textit{jointly measurable} or \textit{compatible} if there exists a single
measurement whose various coarse-grainings recover the original measurements. The problem of characterizing the joint measurability 
of observables has been studied in the literature~\cite{heinosaari, SRH}, and at least the joint measurability of binary qubit observables has 
been completely characterized~\cite{BS, YuOh}. The connection between Bell inequality violations and the joint measurability of observables
has also been quantitatively studied~\cite{andersson, wolfetal}.

A natural question that arises when thinking about the (in)compatibility of observables is the following: given a set of 
(in)compatibility relations on a set of vertices representing observables, do they admit a quantum realization? That is, 
can one write down a positive-operator valued measure (POVM) for each vertex such that the (in)compatibility relations 
among the vertices are realized by the assigned POVMs? After formally defining these notions, we answer this question in the 
affirmative by providing an explicit construction of POVMs for any set of (in)compatibility relations. This is our main result.
We will use the terms `(not) jointly measurable' and `(in)compatible' interchangeably in this paper. Part of our motivation in
studying this question comes from the simplest example of joint measurability relations realizable with POVMs but not with 
projective measurements. This joint measurability scenario, referred to as Specker's scenario~\cite{Spe60, LSW, KG}, involves
three binary measurements that can be jointly measured pairwise but not triplewise: that is, for the set of binary measurements
$\{M_1,M_2,M_3\}$, the (in)compatibility relations are given by the collection of compatible subsets $\{\{M_1,M_2\},\{M_2,M_3\},\{M_1,M_3\}\}$.
The remaining nontrivial subset (with at least two observables), namely $\{M_1,M_2,M_3\}$, is incompatible. This can be 
pictured as a hypergraph (Fig.~\ref{specker}).

Specker's scenario has been exploited to violate a generalized noncontextuality inequality using a set of three qubit POVMs 
realizing this scenario~\cite{genNC,LSW,KG}. This novel demonstration of contextuality in quantum theory
raises the question whether there exist other contextuality scenarios---for example in an observable-based hypergraph approach as 
in~\cite{AB,CF}---that do not admit a proof of quantum contextuality using projective measurements, but do admit such a proof using POVMs.
A necessary first step towards answering this question is to figure out what compatibility scenarios are realizable in quantum
theory. One can then ask whether these scenarios allow nontrivial correlations that rule out generalized noncontextuality~\cite{genNC}.
We take this first step by proving that, in principle, all joint measurability hypergraphs are realizable in quantum theory.
The realizability of all joint measurability graphs via projective measurements has been shown recently~\cite{HRF}. This prompted our 
question whether all joint measurability hypergraphs are realizable via POVMs. Our positive answer includes joint measurability
hypergraphs that do not admit a realization using projective measurements. For our construction, it suffices to consider binary observables on
finite-dimensional Hilbert spaces. We start with a more detailed discussion of the relevant concepts.

\section{Definitions}
\paragraph*{POVMs.} A positive-operator valued measure (POVM) on a Hilbert space $\mathcal{H}$ is a mapping $x\mapsto M(x)$ from an outcome set $X$ to the set of positive semidefinite operators
, 
\[
M(x)\in\mathcal{B}(\mathcal{H}),\quad M(x)\geq 0,
\]
such that the POVM elements $M(x)$ sum 
to the identity operator,
\[
\sum_{x\in X}M(x)=I.
\]
If $M(x)^2=M(x)$ for all $x\in X$, then the POVM becomes a ``projection valued measure'', or simply a projective measurement.

\paragraph*{Joint measurability of POVMs.} A finite set of POVMs 
\[
\{M_1,\dots,M_N\},
\]
where measurement $M_i$ has outcome set $X_i$,
is said to be \textit{jointly measurable} or \textit{compatible} if there exists a POVM $M$ with outcome set $X_1\times X_2 \times \dots \times X_N$ that marginalizes to each $M_i$ with outcome set $X_i$, meaning that
\[
M_i(x_i)=\sum_{x_1,\ldots,\cancel{x_i},\ldots,x_N}M(x_1,\dots,x_N) 
\]
for all outcomes $x_i\in X_i$.

\paragraph*{Joint measurability hypergraphs.} A \textit{hypergraph} consists of a set of vertices $V$, and 
a family $E \subseteq \{ e \mid e \subseteq V \}$ of subsets of $V$ called \textit{edges}.
We think of each vertex as representing a POVM, while an edge models joint measurability of the POVMs 
it links. Since every subset of a set of compatible measurements should also be compatible, a joint 
measurability hypergraph should have the property that any subset of an edge is also an edge,
\[
e\in E,\: e'\subseteq e \implies e'\in E.
\]
Additionally, we focus on the case where each edge $e$ is a finite subset of $V$.
This makes a joint measurability hypergraph into an abstract simplicial complex.

Every set of POVMs on $\mathcal{H}$ has such an associated joint measurability hypergraph. 
Hence characterizing joint measurability of quantum observables comes down to figuring out their joint measurability hypergraph.
Our main result solves the converse problem. Namely, ecommutativityvery abstract simplicial complex arises from the joint measurability
relations of a set of quantum observables.

\section{Quantum realization of any joint measurability structure}

\begin{theorem*}
 Every joint measurability hypergraph admits a quantum realization with POVMs.
\end{theorem*}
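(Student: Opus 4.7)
The plan is a direct-sum construction organized around the minimal non-faces of the simplicial complex $(V,E)$. A \emph{minimal non-face} is a finite $N\subseteq V$ with $N\notin E$ such that every proper subset of $N$ lies in $E$. For each such $N$ I would build a local ``incompatibility witness'': a finite-dimensional Hilbert space $\mathcal{H}_N$ and POVMs $M_v^{(N)}$ indexed by all $v\in V$, such that $\{M_v^{(N)}:v\in N\}$ is not jointly measurable while $\{M_v^{(N)}:v\in N'\}$ is jointly measurable for every $N'\subsetneq N$, and with $M_v^{(N)}$ set to the trivial single-outcome POVM $\{I_{\mathcal{H}_N}\}$ whenever $v\notin N$ (which is automatically compatible with any family of POVMs on $\mathcal{H}_N$). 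The global realization would then be $\mathcal{H}=\bigoplus_N \mathcal{H}_N$ with $M_v=\bigoplus_N M_v^{(N)}$.

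Two structural observations stitch the construction together. First, joint measurability is hereditary: a joint POVM for $\{M_v:v\in S\}$ marginalizes to a joint POVM for any subset $S'\subseteq S$, so incompatibility of a sub-collection upgrades to incompatibility of the whole. Second, joint measurability distributes over direct sums: a joint POVM $G$ on $\bigoplus_N \mathcal{H}_N$ can be replaced by $\sum_N P_N G P_N$ (with $P_N$ the projection onto $\mathcal{H}_N$) without altering its marginals, so $\{M_v:v\in S\}$ is jointly measurable iff $\{M_v^{(N)}:v\in S\}$ is jointly measurable for every $N$. Given these, verification is immediate: if $S\in E$ then $S$ contains no minimal non-face, so in each block the elements of $S\cap N$ form a proper subset of $N$ (compatible by construction) padded with trivial POVMs; if $S\notin E$ then $S\supseteq N$ for some minimal non-face $N$, and the $N$-block already contains the incompatible witness.

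The crux is the local construction, which requires, for each $n\ge 2$, a family of $n$ POVMs that is $(n-1)$-wise jointly measurable but not $n$-wise jointly measurable. For $n=2$ any incompatible pair suffices, and for $n=3$ one can use the noisy trine POVMs of Specker's scenario. For general $n$, my approach would be to seek an explicit symmetric configuration of noisy binary projective measurements on a finite-dimensional Hilbert space, exhibit an averaged joint POVM for every $(n-1)$-subset by a symmetrization argument, and rule out a global joint POVM by a positivity witness that sharpens as $n$ grows.

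This last step is the real obstacle: the direct-sum assembly is essentially formal, but constructing and certifying $n$-Specker scenarios for arbitrary $n$ is not routine. The positive side, producing a joint POVM for every proper sub-collection, is usually tractable by a symmetric closed-form expression; the negative side, proving that no joint POVM exists for the whole set, demands a clean incompatibility criterion---an operator inequality or separating linear functional---that scales uniformly with $n$, and this is where the bulk of the technical effort will concentrate.
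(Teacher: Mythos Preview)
Your global architecture is precisely the paper's: identify the minimal incompatible sets (your minimal non-faces), realize each as an $n$-Specker witness on its own finite-dimensional space with trivial POVMs assigned to the remaining vertices, and assemble everything by direct sum. Your two structural observations---heredity of joint measurability and its block-diagonal reduction---are exactly what justify the paper's final assembly step. One cosmetic point: for the direct sums to be literal you want matching outcome sets across blocks, so pad the trivial measurement to the binary POVM $\{0,I\}$ rather than the single-outcome $\{I\}$, as the paper does.

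The only substantive gap, which you have correctly isolated, is the existence of an $n$-Specker family for every $n\ge 2$. The paper supplies it via Clifford generators: on a Hilbert space carrying hermitian $\Gamma_1,\dots,\Gamma_n$ with $\Gamma_j\Gamma_k+\Gamma_k\Gamma_j=2\delta_{jk}I$, set $E^k_\pm=\tfrac12(I\pm\eta\,\Gamma_k)$. The anticommutation relations force $\bigl(\sum_k x_k\Gamma_k\bigr)^2=n\,I$ for $x_k\in\{\pm1\}$, and this single identity delivers both directions at once. For sufficiency, $E_{\vec x}:=2^{-n}\bigl(I+\eta\,\vec x\cdot\vec\Gamma\bigr)$ is positive exactly when $\eta\le 1/\sqrt{n}$ and marginalizes to each $E^k_{x_k}$. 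For necessity, one recovers $\eta=\tfrac{1}{nd}\sum_{\vec x}\Tr\bigl[(\vec x\cdot\vec\Gamma)\,E_{\vec x}\bigr]$ from any putative joint POVM $\{E_{\vec x}\}$ and bounds it using $\|\vec x\cdot\vec\Gamma\|=\sqrt{n}$; this is exactly the ``separating linear functional that scales uniformly with $n$'' you anticipated. Choosing any $\eta\in\bigl(1/\sqrt{n},\,1/\sqrt{n-1}\,\bigr]$ then yields $n$ binary POVMs that are $(n-1)$-wise compatible but not $n$-wise compatible.
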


\begin{proof}
We begin by proving a necessary and sufficient criterion for the joint measurability of $N$ binary POVMs $M_k:=\{E^k_{+},E^k_{-}\}$ of the form
\begin{equation}
\label{Ek}
 E^k_{\pm}:=\frac{1}{2}\left(I\pm\eta\Gamma_{k}\right),
\end{equation}
where the $\Gamma_{k}$ are generators of a Clifford algebra as in the Appendix. The variable $\eta\in[0,1]$ is a purity parameter.
Since $\Gamma_k^2=I$, the eigenvalues of $\Gamma_k$ are $\pm 1$, so that $E^k_\pm$ is indeed positive. The following derivation
of a joint measurability criterion is adapted from a proof first obtained in~\cite{LSW}, and subsequently revised in~\cite{KG},
for the joint measurability of a set of noisy qubit POVMs.
Because $\Gamma_k$ is traceless by~\eqref{traceless}, we can recover the purity parameter $\eta$ as
\[
 \Tr(\Gamma_k E^k_{\pm})=\pm\frac{\eta}{2}d,
\]
so that
\begin{equation}
\label{recovereta}
 \eta=\frac{1}{Nd}\sum_{k=1}^{N}\sum_{x_k\in X_k}\Tr(x_k\Gamma_k E^k_{x_k}),
\end{equation}
where we have introduced one separate outcome $x_k\in X_k:=\{+1,-1\}$ for each measurement $M_k$.

If all 
$M_k=\{E^k_+,E^k_-\}$ together are jointly measurable, then there exists
a joint POVM $M=\{E_{x_1\dots x_N}\}$ satisfying
\[
 E^k_{x_k}=\sum_{x_1,\ldots \cancel{x_k},\ldots x_N}E_{x_1\dots x_N}.
\]
Writing $\vec{x}:=(x_1,\ldots,x_N)$ and $\vec{\Gamma}:=(\Gamma_1,\ldots,\Gamma_N)$, this assumption together with~\eqref{recovereta} implies that
\begin{align*}
 \eta&=\frac{1}{Nd}\sum_{\vec{x}}\Tr\left[\left(\sum_{k=1}^{N}x_k\Gamma_k\right)E_{x_1\dots x_N}\right]\\[5pt]
 &\leq \frac{1}{Nd}\sum_{\vec{x}} \|\vec{x}\cdot\vec{\Gamma}\| \:\Tr\left[E_{\vec{x}}\right]\\[5pt]
 &= \frac{1}{N} \|\vec{x}\cdot\vec{\Gamma}\|,
\end{align*}
where the last step used the normalization $\sum_{\vec{x}} E_{\vec{x}}=I$. Since $(\vec{x}\cdot\vec{\Gamma})^2=\sum_k X_k^2=N\cdot I$ by~\eqref{cliffordprod}, we have $\|\vec{x}\cdot\vec{\Gamma}\|=\sqrt{N}$, and therefore
\[
\eta \leq  \frac{1}{\sqrt{N}} ,
\]
a necessary condition for joint measurability of $M_k$.
To show that this condition is also sufficient, we consider the joint POVM $M=\{E_{\vec{x}}\}$
given by
\begin{equation}
 E_{x_1\dots x_N}:=\frac{1}{2^N} \left(I+\eta\:\vec{x}\cdot\vec{\Gamma}\right).
\end{equation}
We start by showing that this indeed defines a POVM,
\[
E_{x_1\dots x_N}\geq 0, \quad \sum_{x_1,\dots, x_N} E_{x_1\dots x_N}=I.
\]
Positivity follows again from noting that the eigenvalues of $\vec{x}\cdot\vec{\Gamma}$ are $\pm\sqrt{N}$ by~\eqref{cliffordprod}, and normalization from $\sum_{\vec{x}}\vec{x}\cdot\vec{\Gamma}=0$.
Since
\[
 \sum_{x_1,\ldots,\cancel{x_k},\ldots,x_N} E_{x_1\dots x_N}=\frac{1}{2}\left(I+\eta x_k\Gamma_k\right)
\]
coincides with~\eqref{Ek}, we have indeed found a joint POVM marginalizing to the given $M_k$.

Thus $\eta\leq \frac{1}{\sqrt{N}}$ is a necessary and sufficient condition for the joint measurability of $M_1,\ldots,M_N$.

For arbitrary $N$, then, we can construct $N$ POVMs on a Hilbert space of appropriate dimension such that any 
$N-1$ of them are compatible, whereas all $N$ together are incompatible: 
simply take $M_1,\ldots, M_{N}$ from~\eqref{Ek} for any purity parameter $\eta$ satisyfing
\[
 \frac{1}{\sqrt{N}}<\eta\leq\frac{1}{\sqrt{N-1}}.
\]
For example, $\eta=1/\sqrt{N-1}$ will work. The above reasoning guarantees that all $N$ of them together are not compatible, and also that the $M_1,\ldots,M_{N-1}$ are compatible. By permuting the labels and 
observing that the above reasoning did not rely on any specific ordering of the $\Gamma_k$, we conclude that \textit{any} $N-1$ measurements among the $M_1,\ldots,M_N$ are compatible.

What we have established so far is that, 
if we are given any $N$-vertex joint measurability hypergraph 
where every subset of $N-1$ vertices is compatible (\textit{i.e.}\ belongs to a common edge), 
but the $N$-vertex set is incompatible, 
then the above construction provides us with a quantum realization of it.
These ``Specker-like'' hypergraphs are crucial to our construction. For example, for $N=3$, we
obtain a simple realization of Specker's scenario (Fig. \ref{specker}). For $N=2$, we simply obtain a pair of incompatible observables.
Given an arbitrary joint measurability hypergraph, the procedure to construct a quantum realization is now the following:

\begin{enumerate}
 \item Identify the minimal incompatible sets of vertices in the hypergraph. A minimal incompatible set is an incompatible
set of vertices such that any of its proper subsets \textit{is} compatible. In other words, it is a Specker-like hypergraph embedded
 in the given joint measurability hypergraph.
 \item For each minimal incompatible set, construct a quantum realization as above. Vertices that are outside this minimal incompatible set
 can be assigned a trivial POVM in which one outcome is deterministic, represented by the identity operator $I$. Let $\mathcal{H}_i$ denote
 the Hilbert space on which the minimal incompatible set is realized, where $i$ indexes the minimal incompatible sets.
 \item Having thus obtained a quantum representation of each minimal incompatible set, we simply ``stack'' these together in a direct
 sum over the Hilbert spaces on which each of the minimal incompatible sets are realized. On this larger direct sum Hilbert space
 $\mathcal{H}=\oplus_{i} \mathcal{H}_i$, we then have a quantum realization of the joint measurability hypergraph we started with.
\end{enumerate}
For any edge $e\in E$, the associated measurements are compatible on every $\mathcal{H}_i$, and therefore also on $\mathcal{H}$. On the other hand, every $e'\subseteq V$ that is not an edge is contained in some minimal incompatible set (or is itself already minimal), and therefore the associated POVMs are incompatible on some $\mathcal{H}_i$, and 
hence also on $\mathcal{H}$.
\end{proof}

\begin{figure}
\includegraphics[width=0.4\textwidth]{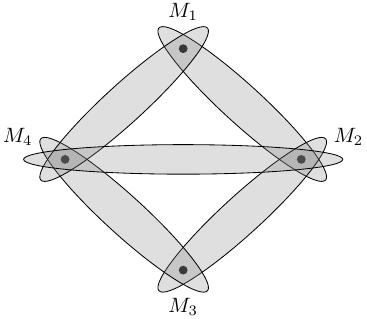}
\caption{A joint measurability hypergraph for $N=4$.}
\label{hyper}
\end{figure}

\begin{figure}
\includegraphics[width=0.5\textwidth]{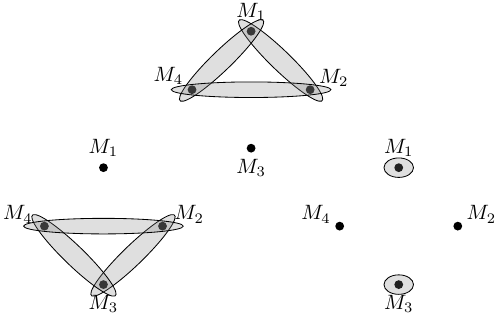}
\caption{Minimal incompatible sets for the joint measurability hypergraph in Fig.~\ref{hyper}.}
\label{incomp}
\end{figure}

\section{A simple example} 
To illustrate these ideas, we construct a POVM realization of a simple joint measurability hypergraph that does not admit a 
representation with projective measurements (Fig.~\ref{hyper}). This hypergraph can be decomposed into three minimal incompatible sets of vertices (Fig.~\ref{incomp}).
Two of these are Specker scenarios for $\{M_1,M_2,M_4\}$ and $\{M_2,M_3,M_4\}$, and the third one is a pair of incompatible vertices
$\{M_1,M_3\}$. For the minimal incompatible set $\{M_1,M_2,M_4\}$, we construct a set of three binary POVMs, $A_k\equiv\{A^k_+,A^k_-\}$ with $k\in \{1,2,4\}$ on a qubit Hilbert space $\mathcal{H}_1$ given by
\begin{equation}
A^k_{\pm}:=\frac{1}{2}\left(I\pm \frac{1}{\sqrt{2}}\Gamma_k\right),
\end{equation}
where the 
matrices $\{\Gamma_1,\Gamma_2,\Gamma_4\}$ can be taken to be the Pauli matrices,
\[
\Gamma_1 = \sigma_z,\quad \Gamma_2=\sigma_x,\quad \Gamma_4=\sigma_y,
\]
similar to~\eqref{pauli}.
The remaining vertex $M_3$ can be taken to be the trivial POVM $A_3=\{0,I\}$ on $\mathcal{H}_1$. A similar construction works for the second 
Specker scenario $\{M_2,M_3,M_4\}$ 
by setting
$B_k:=\{B^k_+,B^k_-\}$ with $k\in \{2,3,4\}$ to be
\begin{equation}
B^k_{\pm}:=\frac{1}{2}\left(I\pm \frac{1}{\sqrt{2}}\Gamma_k\right),
\end{equation}
where
\[
\Gamma_2=\sigma_z,\quad \Gamma_3=\sigma_x,\quad \Gamma_4=\sigma_y
\]
act on another qubit Hilbert space $\mathcal{H}_2$. The remaining vertex $M_1$
can be assigned the trivial POVM, $B_1=\{0,I\}$. The third minimal incompatible set $\{M_1,M_3\}$ can similarly 
be obtained on another qubit Hilbert space $\mathcal{H}_3$ as $C_k:=\{C^k_+,C^k_-\}$, with $k\in \{1,3\}$, given by
\begin{equation}
C^k_{\pm}:=\frac{1}{2}(I\pm \Gamma_k),
\end{equation}
where now \textit{e.g.}~$\Gamma_1=\sigma_z$ and $\Gamma_3=\sigma_x$. 
The remaining vertices $M_2$ and $M_4$ can both be 
assigned the trivial POVM $C_2=C_4:=\{0,I\}$ on $\mathcal{H}_3$.

In the direct sum Hilbert space
$\mathcal{H}:=\mathcal{H}_1\oplus \mathcal{H}_2\oplus \mathcal{H}_3$, we then have a POVM realization 
of the joint measurability hypergraph of Fig.~\ref{hyper}, given by
\[
 M^k_{\pm}:=A^k_{\pm}\oplus B^k_{\pm}\oplus C^k_{\pm}.
\]

\section{Discussion} We have shown, by construction, that any conceivable set of (in)compatibility relations for any number
of quantum measurements can be realized using a set of binary POVMs. Our result thus demonstrates that quantum theory is not 
constrained to admit only a restricted set of (in)compatibility relations, such as those where pairwise compatibility $\Leftrightarrow$
global compatibility, which is the case with projective measurements. Indeed, quantum theory admits all possible (in)compatibility
relations. With respect to (in)compatibility relations, therefore, quantum theory is as far away from classical theories 
(where there are no incompatibilities) as possible. By ``classical theories'' we mean those where all measurements commute.

Although our simple construction works for all joint measurability hypergraphs, it is probably not the most efficient one for
a given joint measurability hypergraph: for Fig.~\ref{hyper}, our representation lives on a six-dimensional Hilbert space. 
For a joint measurability hypergraph with a fixed number of vertices, the dimension of the Hilbert space $\mathcal{H}$ on which our construction
is realized depends on the number of minimal incompatible sets in the hypergraph: that is, $\dim \mathcal{H}=\sum_i \dim \mathcal{H}_i$,
where $\mathcal{H}_i$ is the Hilbert space on which the $i$th minimal incompatible set is realized. 
It remains open what the most efficient construction---requiring the smallest Hilbert space dimension---for a given joint 
measurability hypergraph is. Concerning quantum contextuality, in future work we intend to study whether our sets of POVMs can
lead to nonclassical correlations in the scenarios associated with the underlying joint measurability hypergraphs. This will open
up new avenues for exploiting the nonclassicality of quantum correlations in potential information-theoretic tasks. On the 
theoretical side, our result also opens the door to the use in quantum contextuality of homology theory, matroid theory, and
other powerful combinatorial machinery that relies on hypergraphs, and vice versa. Another potential application of our result
could be in situations where the (in)compatibility of observables is a resource for some task: for example, in such scenarios
one could require a set of measurements to satisfy a specific set of (in)compatibility relations to be useful for the task at
hand and our construction may then offer a way to realize those (in)compatibility relations.

Quite independent of potential applications, our result is of foundational significance for physics since it captures all conceivable (in)compatibility relations within the framework
of quantum theory. We have shown that POVMs allow joint measurability structures that have no analog when thinking of projective measurements alone, and in doing so our contribution
sheds light on the structure of quantum theory and what it really allows us to do.

\section*{Appendix: Clifford algebras}
A \textit{Clifford algebra} consists of a finite set of hermitian matrices $\Gamma_1,\ldots,\Gamma_N$ 
satisfying the relations\footnote{Strictly speaking, this is a \textit{representation} of a Clifford algebra, but the difference 
between algebras and their representations is not relevant here.}

\begin{equation}
\label{acrel}
\Gamma_j\Gamma_k + \Gamma_k\Gamma_j = 2\delta_{jk}I,
\end{equation}
Clifford algebras are the mathematical structure behind the definition of spinors and the Dirac equation~\cite{Lounesto}. 
They can be constructed recursively as follows~\cite[Sec.~16.3]{Lounesto}. Given $\Gamma_1,\dots,\Gamma_N$ living on a Hilbert
space $\mathcal{H}_{N}$, one obtains $\Gamma_1,\ldots,\Gamma_{N+2}$ on $\mathcal{H}_N\otimes \mathbb{C}^2$ by the following rules.
\begin{enumerate}
\item For each $i=1,\dots,N$, substitute
\[
\Gamma_i \rightarrow \Gamma_i\otimes\sigma_z.
\]
\item Further, define
\[
\Gamma_{N+1}:= I \otimes \sigma_x,\quad \Gamma_{N+2}:= I\otimes \sigma_y.
\]
\end{enumerate}
It is easy to show that if the original $\Gamma_i$ satisfy~\eqref{acrel}, then so do the new ones. One can simply start the 
recursion with $\Gamma_1=1$ on the one-dimensional Hilbert space $\mathcal{H}_1:=\mathbb{C}$, and then apply the construction as 
often as necessary to obtain any finite number of matrices satisfying~\eqref{acrel}. For example, 
a single iteration gives the Pauli matrices
\begin{equation}
\label{pauli}
\Gamma_1=\sigma_z,\quad \Gamma_2=\sigma_x,\quad \Gamma_3 = \sigma_y,
\end{equation}
while after two iterations one has 
\begin{eqnarray*}
&\Gamma_1 = \sigma_z\otimes\sigma_z, \quad \Gamma_2 = \sigma_x\otimes\sigma_z,\\[5pt]
&\Gamma_3 = \sigma_y\otimes\sigma_z, \quad \Gamma_4 = I\otimes\sigma_x,\quad \Gamma_5 = I\otimes\sigma_y.
\end{eqnarray*}
The Clifford algebra relations~\eqref{acrel} have many interesting consequences. For example for $N\geq 2$, one has for any $k$ and $j\neq k$,
\begin{eqnarray*}
\Tr(\Gamma_k) &= \Tr(\Gamma_k\Gamma_j\Gamma_j) = -\Tr(\Gamma_j\Gamma_k\Gamma_j) \\[5pt]
&= -\Tr(\Gamma_k\Gamma_j\Gamma_j) = -\Tr(\Gamma_k),
\end{eqnarray*}
so that
\begin{equation}
\label{traceless}
\Tr(\Gamma_k)=0. 
\end{equation}
Another consequence is that
\begin{equation}
\label{cliffordprod}
\left(\sum_k X_k \Gamma_k\right)^2 = \left(\sum_k X_k^2\right) \cdot I
\end{equation}
for arbitrary real coefficients $X_k$.
\section*{Acknowledgments}
R.K. thanks the Perimeter Institute for hospitality during
his visit while this work was being performed. Research
at Perimeter Institute is supported by the Government of
Canada through Industry Canada and by the Province of
Ontario through the Ministry of Economic Development
and Innovation. C.H. was supported by the Engineering and
Physical Sciences Research Council. T.F. has been supported
by the John Templeton Foundation.

\end{document}